 \newif\if@restonecol
 \newif\if@restonecol
\newtheorem{definition}{Definition} 
\newtheorem{lemma}{Lemma}
\newtheorem{example}{Example}
\newcommand{\tabincell}[2]{\begin{tabular}{@{}#1@{}}#2\end{tabular}}
\begin{document}

\title{BigCarl: Mining frequent subnets from a single large Petri net}

\author{Ruqian Lu}
\authornote{Corresponding author}
\email{rqlu@math.ac.cn}

\affiliation{%
  \institution{Academy of Mathematics and Systems Science Key Lab of MADIS Chinese Academy Science,}
    \institution{Key Laborartory of Intelligent Information Processing of Chinese Academy Sciences}
  \city{Beijing}
  \country{China}
  \postcode{100190}
}

\author{Shuhan Zhang}
\affiliation{%
\institution{Key Laborartory of Intelligent Information Processing of Chinese Academy Sciences,}
\institution{University of Chinese Academy Sciences}
  \city{Beijing}
  \country{China}}
\email{zhangshuhan@ict.ac.cn}


\begin{abstract}
While there have been lots of work studying frequent subgraph mining, very few publications have discussed frequent subnet mining from more complicated data structures such as Petri nets. This paper studies frequent subnets mining from a single large Petri net. We follow the idea of transforming a Petri net in net graph form and to mine frequent sub-net graphs to avoid high complexity. Technically, we take a minimal traversal approach to produce a canonical label of the big net graph. We adapted the maximal independent embedding set approach to the net graph representation and proposed an incremental pattern growth (independent embedding set reduction) way for discovering frequent sub-net graphs from the single large net graph, which are finally transformed back to frequent subnets. Extensive performance studies made on a single large Petri net, which contains 10 K events, 40 K conditions and 30 K arcs, showed that our approach is correct and the complexity is reasonable.
\end{abstract}

\begin{CCSXML}
<ccs2012>
 <concept>
  <concept_id>10010520.10010553.10010562</concept_id>
  <concept_desc>Computer systems organization~Information systems</concept_desc>
  <concept_significance>500</concept_significance>
 </concept>
 <concept>
  <concept_id>10010520.10010575.10010755</concept_id>
  <concept_desc>Computer systems organization~Data mining</concept_desc>
  <concept_significance>300</concept_significance>
 </concept>
 <concept>
  <concept_id>10010520.10010553.10010554</concept_id>
  <concept_desc>Computer systems organization~Robotics</concept_desc>
  <concept_significance>100</concept_significance>
 </concept>
 <concept>
  <concept_id>10003033.10003083.10003095</concept_id>
  <concept_desc>Theory of computation~Petri net</concept_desc>
  <concept_significance>100</concept_significance>
 </concept>
</ccs2012>
\end{CCSXML}

\ccsdesc[500]{Computer systems organization~Information systems}
\ccsdesc[300]{Computer systems organization~Data mining}
\ccsdesc[100]{Theory of computation~Petri net mining}

\keywords{Single large Petri net, frequent subnet mining, net graph, maximal independent embedding set}


\maketitle

\section{Introduction}

Frequent subgraph mining has been a hot research topic since the end of the last century \cite{jiang_survey_2013}. However, before the discovery of the pattern growth approach, in particular the proposal of gSpan algorithm \cite{xifeng_yan_gspan:_2002}, frequent subgraph mining was a very hard and tough task \cite{cook_mining_2007}. This task became even more difficult when scientists tried to mine frequent subgraphs from a single large graph \cite{kuramochi_finding_2005}. It invoked many research works \cite{kuramochi_finding_2005,fiedler_support_2007,washio_what_2008,elseidy_grami:_2014,meng_counting_2020}. The main difficulty is how to count the number of embedding copies of a frequent subgraph pattern in case that they are overlapped. To solve this problem, many techniques have been proposed. However, most of them have high computational complexity. For example, MIS \cite{kuramochi_finding_2005} and HO  \cite{fiedler_support_2007} are all NP-Hard. MNI \cite{elseidy_grami:_2014} is polynomial, however, it is heuristic.

Besides that, another phenomenon has raised our attention. While there have been lots of work studying frequent subgraph mining, very few publications have discussed frequent subnet mining from more complicated data structures such as Petri nets \cite{reisig_lectures_1998,murata_petri_1989}. It was not until recently that a preprint has been published \cite{lu_pspan_nodate}, where frequent subnet mining from a set of Petri nets is studied.

The theory of Petri nets, or net theory for short, is a powerful theory for concurrent distributed systems \cite{yuan_application_2013}. It was founded by C.A. Petri in his seminal thesis 'Kommunikation mit Automaten' in 1962 \cite{petri_kommunikation_1962}. Since then the net theory has experienced a rash development and found a wide scope of application, including machinery, chemistry, biology, linguistics, workflow analysis, and many others \cite{hu_petri_2015,liu_robustness_2013,khan_survey_2018,silva_half_2013,bernardo_introduction_2007,peterson_petri_1981,tax_localprocessmodeldiscovery:_2018}. Researches done in the past nearly 60 years have accumulated a huge repository of Petri net pieces. A deep analysis of this repository is needed. However, as we said above, we have found a serious lack of subnet mining study results. This situation has motivated us to do the research presented in the present paper, which studies frequent subnet mining from a single large Petri net.

Technically, we adopt the strategy of transforming a Petri net into net graph representation introduced in \cite{lu_pspan_nodate} and doing frequent sub-net graph mining based on it, but with some modifications and extensions, which will be explained in section \ref{sec:net}. To start sub-net graph mining, we first take a minimal traversal approach to produce a minimal DFS code of the net graph, which represents a canonical labeling of the net graph and is slightly different from \cite{lu_pspan_nodate}. The frequent sub-net graph mining is based on the maximal independent set support principle. The discovery procedure goes ahead in an incremental way. Finally, the discovered frequent sub-net graphs and their distinct embedding copies will be transformed back to Petri net form. At the end, extensive performance studies made on a single large Petri net, which contains 10 K events, 40 K conditions and 30 K arcs, showed that our approach is correct and the complexity is reasonable.

The remainder of the paper is arranged as follows: Section \ref{sec:preli} presents the preliminaries of complete subnet and the net graph representation. Section \ref{sec:minimal} introduces the minimal traversal of a net graph. Section \ref{sec:max}  explains the main idea of BigCarl algorithm under the maximal independent set support framework. Section ~\ref{sec:algorithm} presents in detail the BigCarl algorithm Section \ref{sec:evaluation} evaluates the net graph representation and BigCarl algorithm with a single large Petri net generated in a random way. Finally, section \ref{sec:conclude} provides some concluding remarks.

\section{ PRELIMINARY}\label{sec:preli}

\subsection{Complete subnets}

Petri nets differ from graphs not only in their topological structure (i.e., Petri nets have two types of nodes: the event nodes and the condition nodes) \cite{murata_petri_1989}. According to the point of view of C.A.Petri himself, a subnet of a Petri net should be semantically complete \cite{peterson_petri_1977}. This means if you cut off a subnet from a Petri net, then together with an event node, you should also cut off all its input and output condition nodes. This point of view plays an important role in defining frequent subnet mining. This paper extends the concept of complete subnets in \cite{lu_pspan_nodate} by differentiating two types of complete subnets, where the e-type complete nets coincide with the complete nets defined in  \cite{lu_pspan_nodate}.

\begin{definition}[e-type/c-type complete subnet]

1. Given a Petri net $N$. If $e(c)$ is an event (condition) node of $N$, then $e(c)$ together with all its input and output conditions (events) is called an \textbf{e-type (c-type) 1-complete subnet}. \\
2. An e-type (c-type) m-complete subnet is composed by identifying some of the condition nodes (event nodes) of the m 1-complete subnets with each other.
\end{definition}
$\hfill \square$

Our algorithm BigCarl mines only e-type frequent complete subnets. Since there are many different sorts of Petri nets, when we talk about Petri net in this paper, we always mean its basic sort: the pure C/E net, except if some special interpretation is given. Following is the definition.

\begin{example}
Figure ~\ref{a} borrows a Petri net from \cite{lu_pspan_nodate}, ~\ref{b} and ~\ref{c} show an e-type respectively c-type 1-complete subnet of ~\ref{a}. Figure ~\ref{d} and ~\ref{e} show two different connections of ~\ref{b} and ~\ref{c}, where ~\ref{d} is an e-type 2-complete subnet while ~\ref{e} is a c-type 2-complete subnet.
\end{example}

\begin{figure}
%

	\begin{minipage}[c]{\linewidth}
		\centering
		\subfloat[A pure C/E net ]{\includegraphics[width=0.6\textwidth]{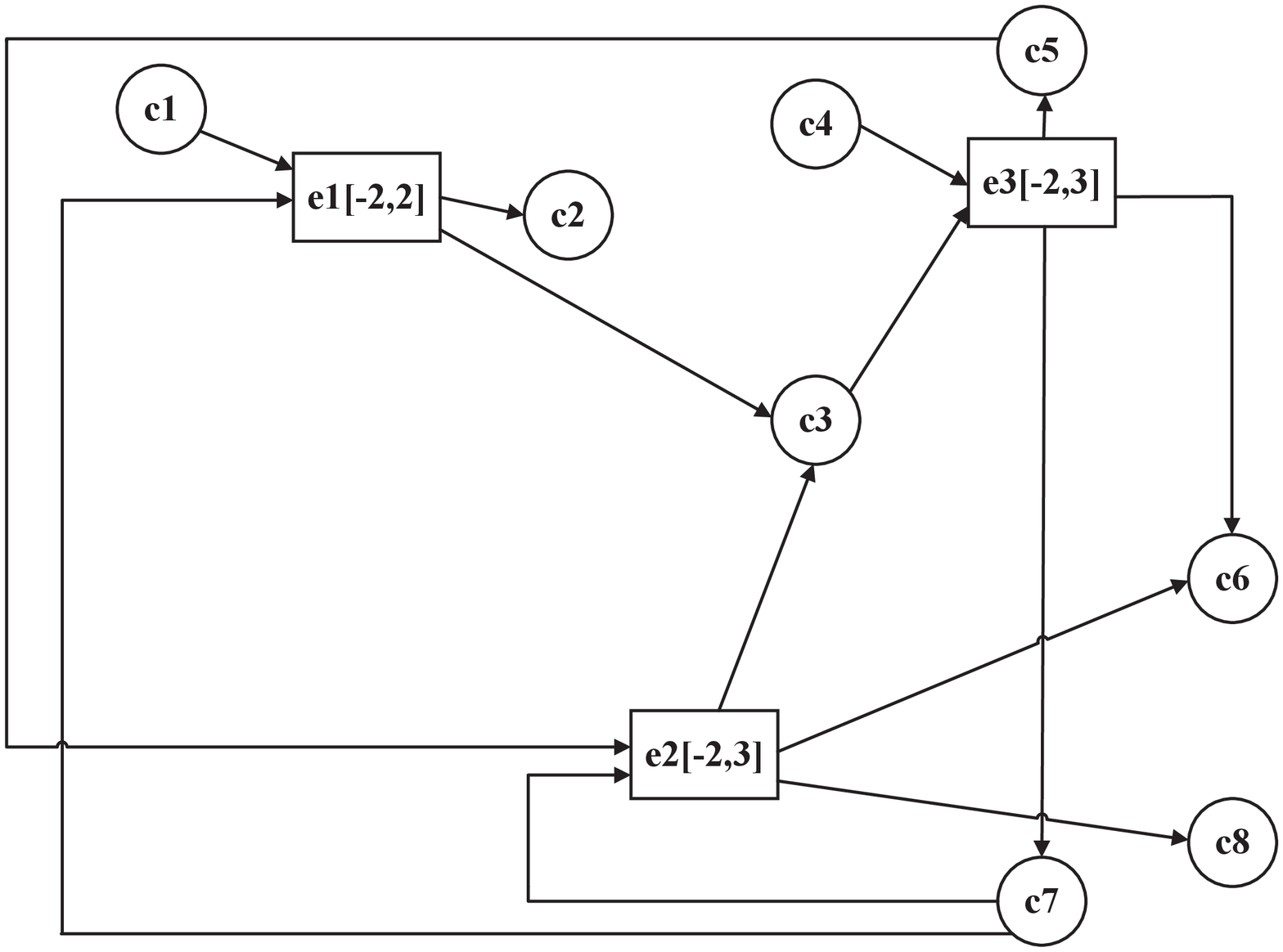}\label{a}}

	\end{minipage}
		\begin{minipage}[c]{0.4\linewidth}
			\centering
			\subfloat[An e-type 1-complete subnet ]{\includegraphics[width=0.8\textwidth]{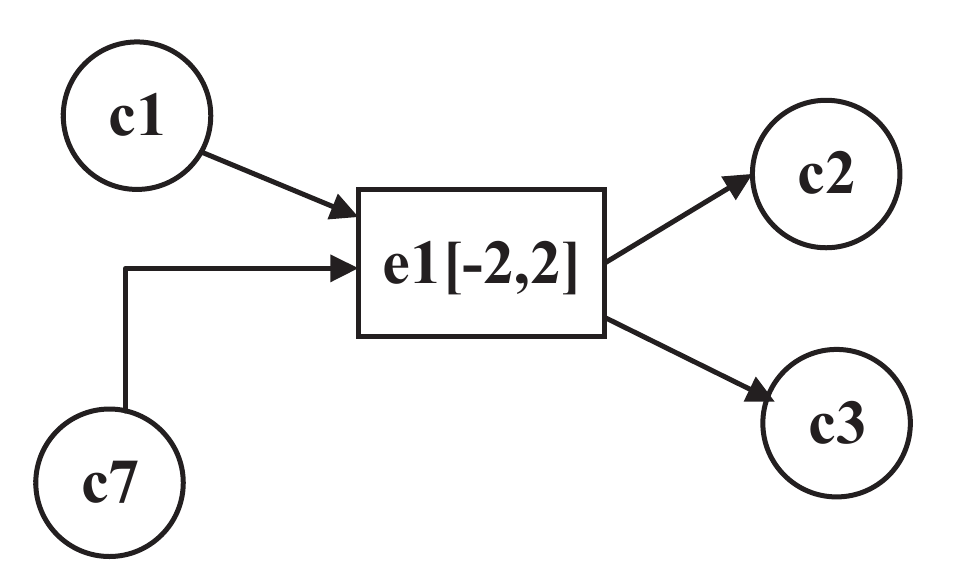}\label{b}}

		\end{minipage}
				\begin{minipage}[c]{0.4\linewidth}
					\centering
					\subfloat[A c-type 1-complete subnet ]{\includegraphics[width=0.8\textwidth]{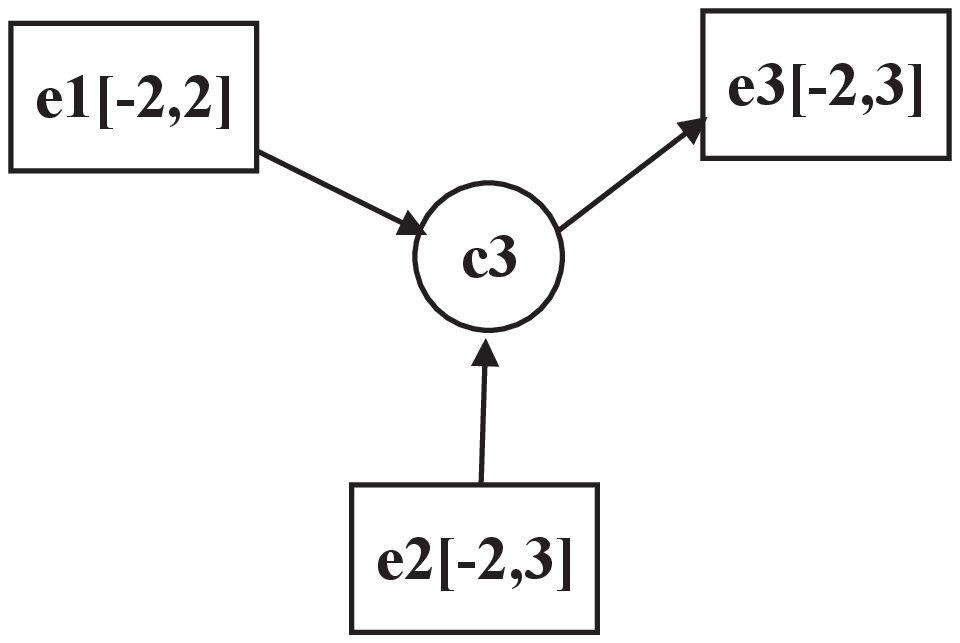}\label{c}}

				\end{minipage}
	\begin{minipage}[c]{.4\linewidth}
		\centering
		\subfloat[An e-type 2-complete subnet composed by $e_1$ and $e_3$ ]{\includegraphics[width=\textwidth]{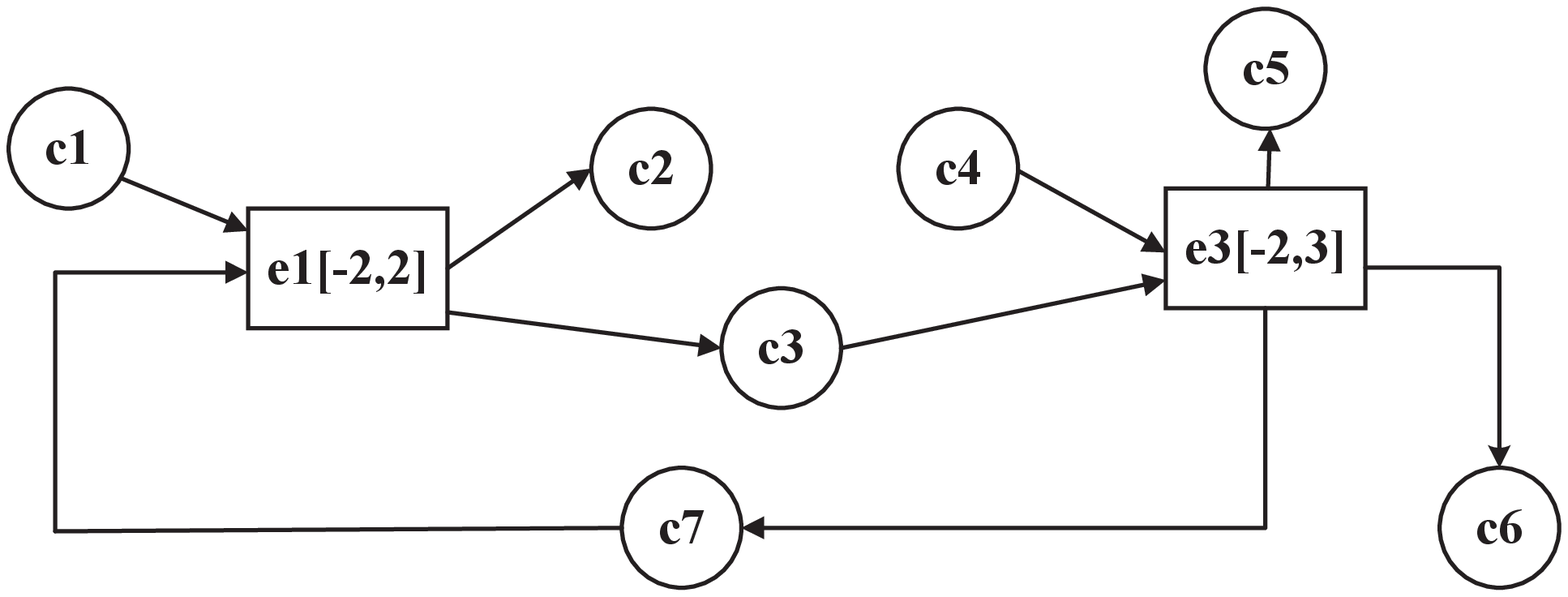}\label{d}
		}

	\end{minipage}
	\begin{minipage}[c]{.4\linewidth}
		\centering
		\subfloat[A c-type 2-complete subnet composed by $c_3$ and $c_5$ ]{\includegraphics[width=0.9\textwidth]{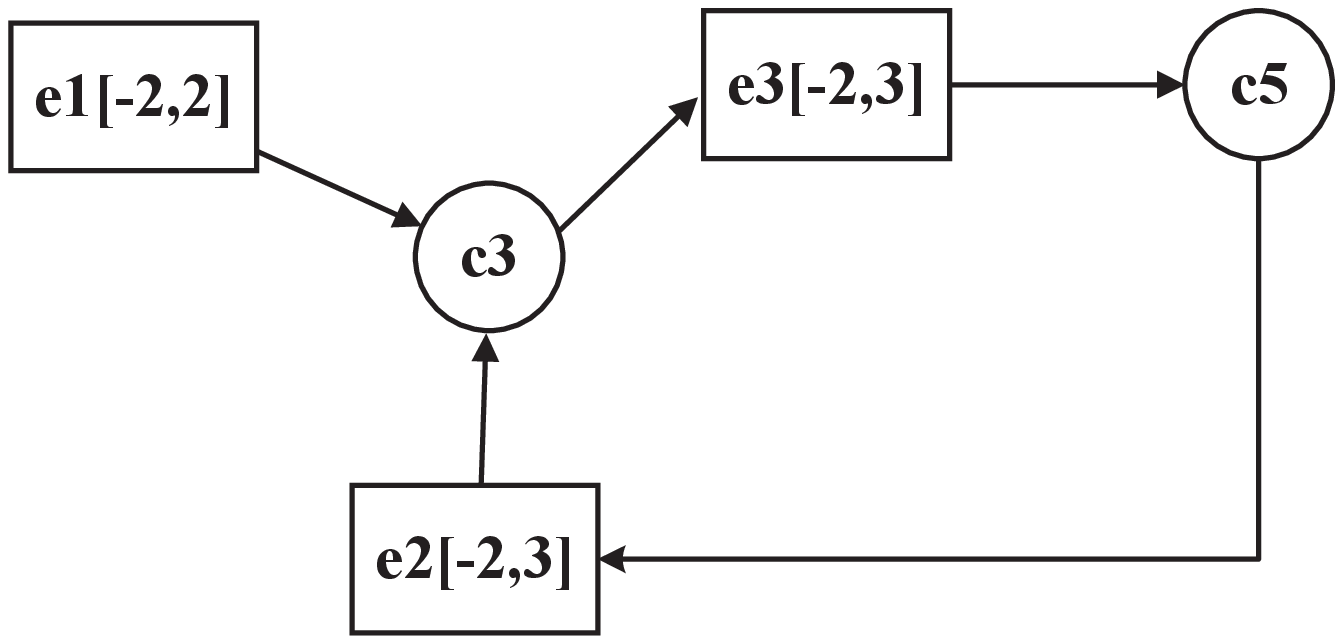}\label{e}
		}

	\end{minipage}

\caption{A Petri net and its complete subnets }
\label{1}
\end{figure}

\subsection{Net graph representation}\label{sec:net}

Net graph is a graph-like representation introduced in  \cite{lu_pspan_nodate} for reducing the complexity of Petri nets’ frequent subnet mining. According to the e-type complete subnet and c-type complete subnet introduced above, we differentiate in this paper between two types of net graphs: the e-type and c-type net graphs.

\begin{definition}[e-type net graph] \label{e-type n}
	A e-type net graph $NGE=(V_e,C_e;W_e)$ is a pseudo-graph transformed from a pure C/E net $N=(C,E;F)$, where\\
	1. $V_e$ is the set of NGE's nodes, $C_e=C$. $V_e=V+$tagging, where each $v\in V_e$ is represented as $e<$signed sequence of conditions connected to $e$ in $N$$>$ for some $e$ of $E$. Here $e$ plays the role of $v$'s node name. For any condition $c$ in this sequence, the sign is '-'('+') if $F$ contains an arc from $c(e)$ to $e(c)$. Note that in this sequence, the signed conditions are alphabetically ordered where symbol ‘-’ is always before symbol ‘+’;   \\
	2. $W_e$ is the set of  NGE's edges. Each edge $w$ connecting $v_1$ and $v_2$ has a tagging which is a lexicographically ordered sequence (symbol ‘-’ is always before symbol ‘+’) of triples where each triple has the form
 $\{h_1,c,h_2\}$, where $c$ is a condition of $C_e$ connecting the two nodes $e_1$ and $e_2$ of 
 $E$, which are the node names of $v_1$ and $v_2$. ${h_1}='-'('+')$ means $F$ contains an arc from $c(e_1)$ to $e_1(c)$. The same for ${h_2}$ and $e_2$. 
\end{definition}
$\hfill \square$
\begin{definition}[c-type net graph]
A c-type net graph $NGC=(C_c,E_c; W_c)$ is a pseudo-graph transformed from a pure C/E net $N=(C,E;F)$. All statements of definition ~\ref{e-type n} apply to $NGC$ if we exchange the roles of events and conditions. 
\end{definition}
$\hfill \square$

The following lemma shows that there is a dual relation between the two types of net graphs.

\begin{lemma}
Given a C/E net $N$. Transform it to its dual form $N'$ by renaming each event as a condition and each condition as an event while keeping the arcs unchanged. Construct $N$'s net graph $NG'$. Transform $NG'$ to its dual form $NG''$ by renaming each node as a condition and each condition in the node's tagging and edge's tagging as an event. At the same time reverse the signs in $G'$'s tagging. Then $NG''$ is equal to  $N$'s net graph $NG$. 
\end{lemma}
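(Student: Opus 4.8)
The plan is to prove the identity by unwinding the two net-graph constructions and the two dualisations and checking that the resulting pseudo-graphs agree node by node, condition by condition, and tagging by tagging. I read $NG'$ in the statement as the $e$-type net graph of the dual net $N'$ (so, incidentally, $NG'$ is just the $c$-type net graph of $N$). Write $\mathcal{D}$ for the dualisation of nets (relabel every event as a condition and every condition as an event, leaving the arc set $F$ literally unchanged) and $\mathcal{G}$ for the $e$-type net-graph construction of Definition~\ref{e-type n}; thus $N'=\mathcal{D}(N)$, $NG'=\mathcal{G}(N')$, $NG=\mathcal{G}(N)$, and $NG''$ is obtained from $NG'$ by the \emph{tagging dualisation} (turn each node name into a condition name, turn each condition name occurring in a node- or edge-tagging into an event name, and reverse every sign). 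Two easy observations organise the argument. First, $\mathcal{D}$ is an involution: applying it twice restores the original bipartition and never touches $F$, so $\mathcal{D}(\mathcal{D}(N))=N$ provided the two relabellings are chosen mutually inverse. Second, to keep the alphabetical and lexicographic orderings inside the taggings undisturbed I also take the relabelling in $\mathcal{D}$ to be order preserving (equivalently, read ``equal'' up to an order-preserving renaming); the orderings on the two sides then coincide automatically and need not be mentioned again.

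First I would match the node and condition sets. In $NG=\mathcal{G}(N)$ the nodes are indexed by $E$ and the condition component is $C$. In $N'$ the events are the relabelled conditions $C$ and the conditions are the relabelled events $E$, so in $NG'=\mathcal{G}(N')$ the nodes are indexed by $C$ and the condition component is $E$; the tagging dualisation sends $NG'$-nodes to conditions and $NG'$-conditions to nodes, so $NG''$ has nodes indexed by $E$ and condition component $C$, exactly as in $NG$. Next I would match the node taggings. Fix $e\in E$. In $NG$ the node $e$ carries the signed sequence of all conditions incident to $e$, with $c$ receiving sign $-$ when $(c,e)\in F$ and sign $+$ when $(e,c)\in F$ (at most one holds since the net is pure). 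Passing to $N'$, the role swap turns $(c,e)\in F$ into ``$e$ is an output condition of the event $c$'' and $(e,c)\in F$ into ``$e$ is an input condition of the event $c$'', so in $NG'$ the node $c$ lists $e$ with sign $+$ exactly when $(c,e)\in F$ and with sign $-$ exactly when $(e,c)\in F$; reversing signs and swapping the roles of $c$ and $e$ as prescribed, the node $e$ of $NG''$ lists $c$ with sign $-$ when $(c,e)\in F$ and $+$ when $(e,c)\in F$ --- precisely the rule used in $NG$. The hidden sign flip produced by exchanging events and conditions is cancelled by the explicit sign reversal of the tagging dualisation.

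Finally I would match the edge taggings. An edge of $NG$ joining distinct nodes $e_1,e_2$ carries one triple $\{h_1,c,h_2\}$ for each condition $c$ incident to both, with $h_i=-$ iff $(c,e_i)\in F$ and $h_i=+$ iff $(e_i,c)\in F$ (conditions shared by more than two events are handled in the same way). Such a $c$ is, in $N'$, a single event incident to the two conditions $e_1,e_2$, so in $NG'$ the node $c$ lists both $e_1$ and $e_2$ in its tagging with the signs found above; the tagging dualisation turns $c$ into a condition sitting on a new edge between the nodes $e_1$ and $e_2$, carrying the triple obtained by keeping $c$ and reversing those two signs, which is exactly $\{h_1,c,h_2\}$. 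Running the correspondence backwards shows every edge and every triple of $NG''$ arises in this way, so the edge taggings agree; together with the two earlier matches this yields $NG''=NG$.

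The delicate point throughout is the sign accounting in the last two paragraphs: one must be scrupulous that ``input'' and ``output'' are always read relative to the current bipartition, so that dualising the net silently reverses the meaning of every sign, and then verify that this hidden reversal is exactly undone by the explicit reversal built into the net-graph dualisation; everything else is routine bookkeeping. A shorter but less self-contained route would instead observe that $\mathcal{G}$ is a bijection from pure C/E nets onto net graphs and that the tagging dualisation is, by construction, $\mathcal{G}\circ\mathcal{D}\circ\mathcal{G}^{-1}$; the lemma then reads $NG''=(\mathcal{G}\circ\mathcal{D}\circ\mathcal{G}^{-1})(\mathcal{G}(\mathcal{D}(N)))=\mathcal{G}(\mathcal{D}(\mathcal{D}(N)))=\mathcal{G}(N)=NG$, using only that $\mathcal{D}$ is an involution.
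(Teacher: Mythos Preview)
The paper does not actually prove this lemma: its proof reads simply ``Omitted.'' Your proposal therefore goes far beyond what the paper supplies, and there is no approach in the paper to compare against.

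Your argument is essentially correct. The sign bookkeeping in the node-tagging paragraph is accurate, and your closing one-line route via the composite $\mathcal{G}\circ\mathcal{D}\circ\mathcal{G}^{-1}$ together with $\mathcal{D}\circ\mathcal{D}=\mathrm{id}$ is the cleanest way to see the statement. One point worth tightening: in the edge-tagging paragraph you write that ``the tagging dualisation turns $c$ into a condition sitting on a new edge between the nodes $e_1$ and $e_2$,'' which quietly assumes that the dualisation rebuilds the incidence structure of the pseudo-graph rather than merely relabelling the existing one. Since $NG'$ has $|C|$ nodes while $NG$ has $|E|$ nodes, a pure relabelling cannot suffice; the operation must reconstruct adjacencies from the tagging data. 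This is exactly what your reading of the dualisation as $\mathcal{G}\circ\mathcal{D}\circ\mathcal{G}^{-1}$ gives you, and you do flag the interpretive issue at the outset, but the middle paragraph would be stronger if it said explicitly that the new edge set is read off from the (relabelled, sign-reversed) node taggings rather than inherited from $NG'$.
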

\begin{proof}
Omitted.
\end{proof}

\begin{example}
Figure ~\ref{2} shows the e-type net graph and c-type net graph transformed from the pure C/E net in figure ~\ref{a}.
\end{example}
\begin{figure}
%


		\begin{minipage}[c]{0.8\linewidth}
			\centering
			\subfloat[An e-type net graph transformed from figure ~\ref{a} ]{\includegraphics[width=0.5\textwidth]{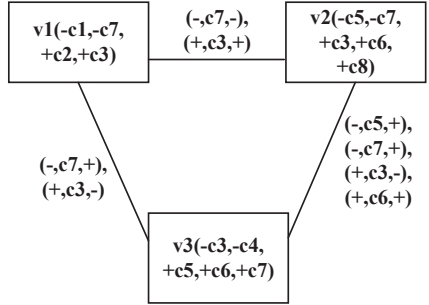}\label{enet}}

		\end{minipage}
				\begin{minipage}[c]{0.8\linewidth}
					\centering
					\subfloat[A c-type net graph transformed from figure ~\ref{a}  ]{\includegraphics[width=\textwidth]{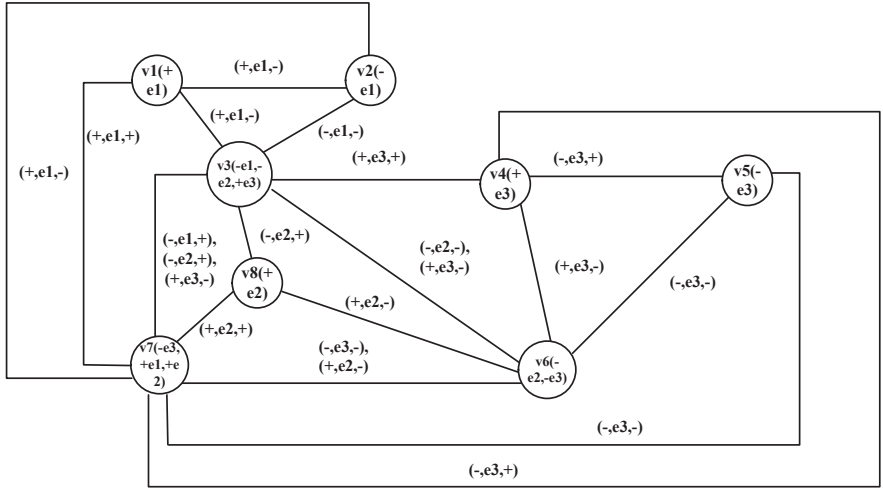}\label{cnet}}

				\end{minipage}

\caption{An e-type net graph and a c-type net graph }
\label{2}
\end{figure}

\begin{lemma}
Given any pure C/E net $N$, there is a one-one transformation between $N$ and its e-type net graph representation and its c-type graph representation. 
\end{lemma}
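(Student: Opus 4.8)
The plan is to show that the map $\Phi\colon N\mapsto NGE$ sending a pure C/E net to its e-type net graph admits an explicit two-sided inverse; injectivity of $\Phi$ (hence that distinct pure C/E nets produce distinct net graphs) then follows from the existence of a left inverse, surjectivity onto the class of e-type net graphs is immediate since that class is \emph{defined} as the image of $\Phi$, and the c-type statement will be deduced from the duality lemma already established.

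First I would write down the reconstruction map $\Psi$. Given an e-type net graph $NGE=(V_e,C_e;W_e)$, set $C:=C_e$; let $E$ be the set of node names (each $v\in V_e$ carries some $e\in E$ as its name); and define $F$ by scanning, for every node $v$ with name $e$, each signed condition $c$ occurring in the tagging of $v$, and adding the arc $(c,e)$ when the sign is $-$ and the arc $(e,c)$ when the sign is $+$. Put $\Psi(NGE):=(C,E;F)$. I would then check that $\Psi(NGE)$ really is a pure C/E net: the arcs produced always join a condition to an event, so the structure is bipartite; each arc is produced at most once, so all arc weights are $\le 1$, i.e.\ it is a C/E net; and since the tagging of a node lists each incident condition with a \emph{single} sign, no condition can be simultaneously a pre- and a post-condition of the same event, so the net is \emph{pure}. (Note that the edge set $W_e$ with its triple taggings is not needed for this reconstruction: an edge of $W_e$ joins $v_1,v_2$ precisely when the events $e_1,e_2$ share a condition, and its triples merely re-record those conditions with their two signs, so $W_e$ is redundant — but consistent — information.)

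Next I would verify $\Psi\circ\Phi=\mathrm{id}$. Starting from a pure C/E net $N=(C,E;F)$ and forming $NGE=\Phi(N)$: the nodes of $NGE$ are exactly the events of $E$ carrying those events as names, and $C_e=C$, so $\Psi$ returns $C$ and $E$ unchanged. For the flow relation, every arc of $F$ is incident to exactly one event $e$ (bipartiteness of a C/E net) and therefore occurs in the tagging of exactly the node named $e$, signed $-$ if it points into $e$ and $+$ if it points out of $e$; hence the arc set reconstructed by $\Psi$ is exactly $F$, so $\Psi(\Phi(N))=N$. This already shows $\Phi$ is injective. Conversely, any e-type net graph equals $\Phi(N)$ for some $N$, and then $\Phi(\Psi(NGE))=\Phi(\Psi(\Phi(N)))=\Phi(N)=NGE$, so $\Phi$ and $\Psi$ are mutually inverse, proving the one-one correspondence in the e-type case. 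For the c-type case I would invoke the duality lemma: the map $N\mapsto N'$ interchanging the roles of events and conditions while keeping the arcs is an involution (hence a bijection) on pure C/E nets, and by that lemma the c-type net graph of $N$ is obtained from the e-type net graph of $N'$ by the relabelling described there (renaming nodes and tag entries and reversing signs), which is manifestly invertible; composing bijections gives that $N\mapsto NGC$ is one-one. Alternatively one simply reruns the previous two paragraphs with $E$ and $C$ exchanged.

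The point that needs care — the "main obstacle", such as it is — is the well-definedness of $\Psi$ and pinpointing where the hypotheses enter: purity is exactly what makes the one-sign-per-condition encoding in a node tagging lossless (in a non-pure net a condition could be both pre- and post-condition of one event and the tag could not tell this apart), and the C/E restriction (arc weights $\le 1$) is what makes it unnecessary to store multiplicities; drop either and $\Psi$ can no longer recover $F$. Everything else is bookkeeping: reconciling the alphabetical/lexicographic orderings imposed on the taggings with the sets they encode, and observing that node names faithfully carry event identities, so that $E$ is recovered even when two distinct events happen to have identical signed neighbourhoods.
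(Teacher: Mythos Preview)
Your proposal is correct: the explicit inverse $\Psi$ you construct does recover $(C,E;F)$ from the node taggings, the verification that $\Psi\circ\Phi=\mathrm{id}$ is sound, and your handling of the c-type case via the duality lemma (or the symmetric rerun) is fine. You also correctly isolate where purity and the C/E restriction are used.

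As for comparison with the paper: the paper's own proof reads, in its entirety, ``Omitted.'' So there is nothing to compare against; your argument simply supplies the details the authors chose not to include. If anything, your write-up goes further than what the lemma strictly demands by explaining why the edge set $W_e$ is redundant-but-consistent and by flagging that the hypotheses (pure, C/E) are sharp --- observations the paper does not make explicit anywhere.
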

\begin{proof}
Omitted.
\end{proof}
\section{MINIMAL TRAVERSAL of a NET GRAPH}\label{sec:minimal}

To start with frequent sub-net graph mining, similar to the idea of \cite{xifeng_yan_gspan:_2002}, we first have to define a depth first traversal of the whole net graph which produces a canonical labeling of the net graph.. For that purpose, we need a lexicographic order of DFS codes. Following the idea of \cite{lu_pspan_nodate}, we do not sort the DFS codes in the lexicographic order after the algorithm has produced all DFS codes, but produce a minimal DFS code during a specific way of net graph traversal, the minimal traversal. However, the strategy taken in this paper is slightly different from that used in \cite{xifeng_yan_gspan:_2002}. In order to assure the uniqueness of minimal traversal, we introduce a natural rule in the following for C/E net construction. 
\begin{definition}[clear C/E net]
A C/E net is called clear, if for any event node $e$,\\
1. No two of $e$'s input condition nodes have the same name, and\\
2. No two of $e$'s output condition nodes have the same name.

\end{definition}
$\hfill \square$

This concept is semantically sound since it is meaningless to have repeated same condition nodes in an event node’s preset or postset. It can be generalized to other kinds of Petri nets. But for the moment we are only interested in C/E nets. We assume all C/E nets discussed in the remaining part of this paper are clear nets.

The following algorithm is a combination of gSpan’s idea about depth first travel and its idea about minimal traversal and our clear net definition above.

%
%
%
%
%
%

\begin{algorithm}[H]
	\caption{BigCarl-Minimal-DFS-Traversal ($NGE$: e-type net graph)}
	\label{algo:mini}

	\raggedright
	\KwIn{$NGE$: e-type net graph}
		\LinesNumbered
	$i=0$;\\
	Choose the minimal node $u$ (node with minimal tagging) as the starting of traversal;\\
	Mark $u$ as current node and already visited, put $u$ in $MinE[0]$; \tcp{$MinE[0]$ is the set of all nodes of net graph with their taggings}
	\While{exist edges untraversed}{
	\If{backward edge $w_1$’s another node $v_1$ is minimal}{put $w_1$ in $MinE[1]$, put $v_1$ in $MinE[0]$, make $v_1$ as current node; \tcp{$MinE[1]$ is the set of all 1-edges of net graph}
	
	}
	\Else{put minimal backward edge $w_2$(edge with minimal tagging) in $MinE[1]$;\\
	put $w_2$’s another node $v_2$ in $MinE[0]$, make $v_2$ as current node;
	}
	$i=i+1$;\\
	\If{there are untraversed forward edges from current node}{put minimal forward edge $w_3$ in $MinE[1]$, put $v_3$ in $MinE[0]$;\\
	make $w_3$'s another node $v_3$ as current node, $i= i +1$;
	}
	} 
$NoE=i;$\tcp{$NoE$ is the id of edge when traversing}
\end{algorithm}
\begin{lemma}
The decision of minima travel path in algorithm 1 is uniquely determined.
\end{lemma}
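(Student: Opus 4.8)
The plan is to show that at every step of Algorithm~1 the choice made is forced, so that no two executions of the algorithm on the same $NGE$ can diverge. I would proceed by induction on the iteration counter $i$ (equivalently, on the length of the partial DFS code produced so far), maintaining as the induction hypothesis that the sequence of nodes and edges recorded in $MinE[0]$ and $MinE[1]$ up to step $i$ is uniquely determined by $NGE$.

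For the base case, the first recorded item is ``the minimal node $u$, i.e.\ the node with minimal tagging.'' I would argue that this is unambiguous: by the \emph{clear C/E net} assumption, no event node has two input conditions with the same name and no event node has two output conditions with the same name, so the signed, alphabetically ordered sequence of conditions attached to an event $e$ determines $e$ (via Lemma~2, the one--one correspondence between $N$ and $NGE$), hence distinct nodes of $NGE$ carry distinct taggings. Since taggings are finite signed strings over an ordered alphabet, they are totally ordered lexicographically, so a unique minimum exists. For the inductive step I would case-split exactly as the algorithm does: given the current node and the set of already-traversed edges (fixed by hypothesis), the algorithm first looks at backward edges and then at forward edges from the current node, in each case selecting the one of \emph{minimal tagging}. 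Edge taggings are lexicographically ordered sequences of triples $\{h_1,c,h_2\}$; I must check these are pairwise distinct among the candidate edges at the current node, which again follows from clearness together with Definition~\ref{e-type n}: two parallel edges between the same pair of nodes would have to share a condition $c$ with the same incidence signs, contradicting that an event's preset (postset) contains no repeated condition. Hence at each branch the minimum among finitely many distinct taggings is unique, and the ``current node'' is thereby updated deterministically.

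The point I expect to require the most care — the main obstacle — is the global one rather than the local one: local uniqueness of each greedy choice does not by itself guarantee that the overall traversal path is well defined, because one must also rule out that the ``choose the minimal untraversed backward/forward edge'' steps could be applied in different orders, or that the $i$-increment bookkeeping leaves ambiguity about which edge receives which id. I would handle this by observing that the control flow of the \textbf{while} loop is itself deterministic — at each pass it is completely determined whether a backward edge exists, and if so whether its other endpoint is minimal — so the sequence of branches taken is a function of $NGE$ alone; combined with local uniqueness at each branch, a straightforward induction then closes the argument. A secondary subtlety is connectivity: the traversal only terminates having visited all edges if the net graph is connected (or one restricts to the connected component of $u$); I would either invoke the standing assumption that $NGE$ is connected, or note that the statement concerns the traversal path actually produced, which is the DFS tree of that component. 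Finally, I would remark that the same argument applies verbatim to the c-type net graph by the duality of Lemma~1.
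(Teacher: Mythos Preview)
Your inductive framework and your attention to the determinism of the control flow are more careful than the paper's two-sentence proof, but your treatment of the \emph{backward} step diverges from the paper's, and that is where the comparison matters. The paper does not argue uniqueness of the backward choice via edge or node taggings at all: it observes that every candidate endpoint of a backward edge has already been visited and therefore carries a DFS traverse number (its discovery index), so ``minimal node'' in that branch means the node with the least such index, which is trivially unique. Only the forward case is handled via taggings and the clear-net principle. Your proposal instead treats backward and forward symmetrically (``in each case selecting the one of minimal tagging''), which is a different mechanism and loses the simple, tagging-independent argument the paper uses for backward edges.

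Your tagging argument also has a gap. You show that two \emph{parallel} edges between the same pair of nodes would force a repeated condition in some event's preset or postset, contradicting clearness. But parallel edges are not really the issue (Definition~\ref{e-type n} already bundles all shared conditions into a single edge); what you actually need is that two edges from the current node $e_1$ to \emph{distinct} neighbours $e_2,e_3$ cannot have identical taggings. Clearness constrains repetitions \emph{within} one event's preset/postset, not coincidences \emph{between} the condition sets shared with different neighbours, so this step is not established by your parallel-edge reasoning. The same worry applies to your base case: ``distinct nodes of $NGE$ carry distinct taggings'' does not follow immediately from the clear-net definition, since two different events could in principle have the same signed condition sequence. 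The paper's forward-edge sentence (``the conditions contained in the tagging of different forward edges must have different names according to the clear net principle'') targets exactly the multi-neighbour situation; for the backward case the paper sidesteps the question entirely by using traverse numbers.
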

\begin{proof}
For backwards traversal, the minimal node on the next ends of the backward edges is uniquely determined since all these end nodes have been traversed at least once and the least traverse number is uniquely determined. For forwards traversal, the forward edges must have different tagging since the conditions contained in the tagging of different forward edges must have different names according to the clear net principle. 
\end{proof}
\section{MAXIMUM INDEPENDENT SET SUPPORT}\label{sec:max}

\subsection{Subnet embedding overlap}
For developing the BigCarl algorithm we first introduce two definitions about overlap.

\begin{definition}
1. Two Petri nets have an \textbf{e-type overlap} if both sides have at least one event node in common or a \textbf{c-type overlap} if both sides have at least one condition node in common.\\
2. With the overlap concept introduced above, we can redefine m-complete subnets as follows: the connection of m e-type 1-complete subnets with help of c-type overlap is called an e-type m-complete subnet, while the connection of m c-type 1-complete subnets with help of e-type overlap is called a c-type m-complete subnet. The connect action itself is called e-connect respectively c-connect. 

\end{definition}
$\hfill \square$

\begin{lemma}
Assume a pure C/E net $N$ is transformed to its e-type (c-type) net graph $NG$, where $N$’s e-type (c-type) complete nets $N_1'$ and $N_2'$ are transformed to two sub-net graphs $NG_1'$ and $NG_2'$. Then $N_1'$ and $N_2'$ do not have event (condition) node overlap if and only if $NG_1'$ and $NG_2'$ do not have node overlap.
\end{lemma}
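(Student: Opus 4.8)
The plan is to unwind the definitions of the net-graph transformation and of "node overlap," and to observe that the construction sets up an explicit bijection between the event nodes of the C/E net $N$ and the nodes of its e-type net graph $NG$ (this is essentially the content of the one-one transformation lemma). Under that bijection, two sub-net graphs share a node precisely when the corresponding complete subnets share the event node carrying the same name. The condition nodes, on the other hand, migrate into the taggings of nodes and edges of $NG$ and are not themselves nodes of the net graph, so they play no role in "node overlap" of net graphs. I will argue the e-type case in full and remark that the c-type case follows by the duality lemma (swap the roles of events and conditions).

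**First** I would recall that, by Definition~\ref{e-type n}, each node $v$ of $NG$ has as its node name a unique event $e$ of $E$, and this assignment $e \mapsto v$ is a bijection $E \to V_e$; the condition set $C_e = C$ reappears only inside the signed sequences tagging the nodes and the triples tagging the edges. **Next** I would note that when a complete subnet $N_i'$ of $N$ is transformed, its e-type net graph $NG_i'$ is exactly the induced sub-net graph on the nodes $\{v : \text{name}(v) \in E(N_i')\}$ — this is what "transformed to a sub-net graph" means, and it follows from the one-one transformation being compatible with taking subnets (an e-type complete subnet is determined by its event nodes together with all their input/output conditions, which is precisely the data recorded by the corresponding nodes and their taggings). **Then** the equivalence is immediate: $NG_1'$ and $NG_2'$ have a common node $\iff$ there is an event $e$ with $e \in E(N_1')$ and $e \in E(N_2')$ $\iff$ $N_1'$ and $N_2'$ have an event-node overlap. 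For the c-type case, apply the duality lemma: the c-type net graph of $N$ equals the e-type net graph of the dual net $N'$ (up to the dual renaming), event overlap in $N'$ is condition overlap in $N$, and the e-type result just proved transfers verbatim.

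**The main obstacle** — really the only point requiring care rather than bookkeeping — is justifying cleanly that the transformation of a complete subnet $N_i'$ yields precisely the induced sub-net graph of $NG$ on the name-set $E(N_i')$, i.e. that the edges and taggings of $NG_i'$ agree with the restriction of those of $NG$. This needs the observation that the tagging of a node (the signed condition sequence) and of an edge (the sequence of triples $\{h_1,c,h_2\}$) depends only on the local incidence structure around the event(s) involved, so it is unchanged under passing to a complete subnet that contains those events together with all their incident conditions — which is exactly the completeness hypothesis. Once that is in hand, nothing else is needed: the biconditional is a one-line consequence of the bijection on node names, and the converse direction (no node overlap $\Rightarrow$ no event overlap) is just the contrapositive of the same bijection. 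I would therefore spend the bulk of the write-up on this compatibility point and dispatch the rest in a sentence each.
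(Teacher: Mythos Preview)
Your proposal is correct and matches the paper's own treatment: the paper simply records the proof as ``Straightforward,'' and what you have written is exactly the definition-unwinding that justifies that one word, with appropriate care on the point that completeness guarantees the sub-net graph of a complete subnet coincides with the induced sub-net graph. There is nothing to add or correct.
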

\begin{proof}
Straightforward. 
\end{proof}
Our BigCarl algorithm adopts the MIS (Maximum Independent Set support) approach with a modification. While a standard MIS algorithm searches subgraph embedding without edge overlap, BigCarl searches subnet embedding without overlap of some node type. Due to semantic reason explained above, BigCarl either only mines e-type m-complete subnets, or only mines c-type m-complete subnets for any m. In the former case, c-type overlap is admitted as junction of several e-type 1-complete nets, while in the latter case, e-type overlap is admitted as junction of several c-type 1-complete nets. It can be proved that without such junctions frequent subnet mining would be incomplete.
\subsection{Bulding a Maximum Independent Set}

\begin{definition}[level-k frequent sub-net graph]
Given a frequency threshold $MinSup>0$ and a net graph $NG$. $Ng'$ is a connected sub-net graph. We call $Ng'$ a frequent sub-net graph if $Ng'$ has $k$ edges and if there are no less than $MinSup$ embedding copies in $NG$ which have no nodes in common.
\end{definition}
$\hfill \square$

The discovery of frequent sub-net graphs is made in an incremental way. A level-k sub-net graph is one which contains k edges. Different from usual subgraph mining, a frequent level-0 sub-net graph is meaningful since it represents a 1-complete subnet (subnet with one event node and all condition nodes connected to it). Our algorithm BigCarl finds all level-0 frequent sub-net graphs with their maximal independent embedding set. BigCarl then tries to increase the frequent sub-net graphs level by level. In this process it keeps a maximal independent set of embedding for each candidate frequent sub-net graph. This set reduces gradually when the above pattern grows until the largest possible $k$ is reached.

We have also extended our approach to solve frequent subnet mining under c-type overlap and mixed e-c-type overlap. Note that although we just take the C/E net as an example model to study the problem of frequent subnet mining from a single large net, we have proved that this approach applies also to a large set of different Petri net classes.

\section{BIGCARL ALGORITHM}\label{sec:algorithm}

The BigCarl algorithm mines frequent subnets from a single large Petri net. A Petri net will be first transformed into an e-type net graph introduced above, on which the subnet mining is done. In this section we only discuss e-type subnet mining. The technique of C-type subnet mining is similar.

\begin{algorithm}

\caption{Mining frequent sub-net graphs ($MinSup$)}
\raggedright

\KwOut{The set of frequent net graphs and their embeddings}
	\LinesNumbered

	\setcounter{AlgoLine}{0}
	Sort the code units in both $MinE[0]$ and $MinE[1]$ according to their alphabetical order;\\
	Remove infrequent nodes from $MinE[0]$ and put the patterns of the remaining nodes (frequent level-0 net graphs) in $MinF[0]$, such that the nodes in $MinE[0,j]$ are the embedding of the $j$-th node pattern in $MinF[0]$;\\
	Remove infrequent edges from $MinE[1]$ and put the patterns of the remaining edges (frequent level-1 net graphs) in $MinF[1]$, such that the edges in $MinE[1,j]$ are the embedding of the $j$-th edge pattern in $MinF[1]$;\\
	Call Algorithm Finding all frequent sub-net graphs (1, $MinF[1],MinE[1],\emptyset,MinF[1],MinE[1],MinSup$);
\end{algorithm}
In the following algorithm, we use a queue $MinFC[k]$ to save the $k$ candidate patterns in a net graph (Line 1). All the embedding copies of these candidate patterns are saved in a hash-map $MinFEC[k,j]$ (Line 2). If two embedding occur in the same $S[h]$, which means they have the condition overlap phenomenon, then an overlap graph can be constructed. Each embedding in the hash-map $MinFEC[k,j]$ is the node, each condition overlap relationship is the edge (Line 3-4). If the maximum independent set support is no less than the frequency threshold $MinSup$, use a queue $MinF[k]$ to save the frequent sub-net graphs of the level-$k$ (Line 5-10). Line 11-21 show the generating procedure of level-$k$ frequent net graph in a level-wise manner. In line 14, for each frequent 1-edge in net graph, add it into the level-$k$ frequent sub-net graph to form the level-(k+1)-candidate sub-net graph. In line 20, an early stop filtering rule is said that if there exist level-$(k+1)$ frequent sub-net graphs, the edges of embedding should no less than $(k+1) \times MinSup$. 

\begin{algorithm}

\caption{Finding all frequent sub-net graphs ($k,MinFC[k],MinFEC[k],S,MinF[1],MinFE[1],MinSup$)}
\raggedright

	\setcounter{AlgoLine}{0}

		\LinesNumbered
\For{$j=1$ to $|MinFC[k]|$}
 {
 \tcp{$|MinFC[k]|$ is the number of candidate level-k patterns}
\For{$m=1$ to $|MinFEC[k,j]|$}{
\tcp{$|MinFEC[k,j]|$ is the number of candidate embedding copies of the j-th level-k pattern}
put the $m$-th embedding $e(m)$ in all $S[h]$ whenever $e(m)$ contains the $h$-th node of the net graph;\\
construct an overlap graph with all embedding in $MinFEC[k,j]$ as nodes and add an edge between ant two nodes if they appear in the same $S[h]$;\\
Call the find largest independent set algorithm to get an independent node set;\\
\If{its size is no less than $MinSup$}{
put the $j$-th pattern of $MinFC[k]$ in $MinF[k]$;\\
put the independent node set in $MinFE[k,j]$;
}
}}
\nl $i=0$;\\
\nl \If{$MinF[k] \not = \emptyset$}{
\nl \For{$j=1$ to $|MinF[k]|$}{\nl \ForEach{pattern $p(1)$ for $MinF[1]$ which can be connected to the $j$-th level-$k$ pattern to form a candidate level-$(k+1)$ pattern $p(k+1)$}{

\nl $i=i+1$;\\
\nl \If{no less than $MinSup$ level-$k$ embedding copies in $MinFE[k,j]$ can be extended with an embedding of $p$ from $MinFE[1]$}{\nl put these embedding copies in $MinFEC[k+1,i]$;\\
\nl remove all other embedding copies from $S$;\\

\nl put $p(k+1)$ in $MinFC[k+1]$;
}}}
}
\nl \If{$(k+1) \times MinSup \le NoE$}{\nl	Call Algorithm Finding all frequent sub-net graphs ($k+1$, $MinFC[k+1],MinFEC[k+1],S,MinF[1],MinFE[1],MinSup$);}	

\end{algorithm}

\section{EXPERIMENTAL EVALUATION}\label{sec:evaluation}
\subsection{A methdology for generating big C/E net}

Since large-scale C/E net resources are not available, the test nets ware generated in a random way. To verify the correctness and efficiency of BigCarl, we have designed a naïve algorithm, called DigCarl, which runs directly on the pure C/E net itself. The results show that the frequent sub-net graphs and subnets obtained by two algorithms are consistent. Our BigCarl algorithm outperforms the DigCarl approach largely. We implemented these two algorithms in C++. All the experiments are conducted on a PC with Intel (R) Core (TM) i7-6700 CPU@3.40HZ and 32G RAM, and the operating system is Windows 10.

\begin{definition}[BIG C/E net]
A basic C/E net consists of an event node with some input/output condition nodes connected to it.\\
\end{definition}
$\hfill \square$

The following algorithms extend the Petri net generation functions of \cite{lu_pspan_nodate}.
\begin{algorithm}
\caption{Generate an experimental C/E net ($x, U, H$)}
\raggedright
	\setcounter{AlgoLine}{0}

		\LinesNumbered
		Input a large (small) positive integer $U$ as the maximum of event nodes if a large (small) net has to be generated;\\
 Input a large (small) positive integer $H$ as the maximum of arcs connecting events with conditions if a dense (sparse) net has to be generated;\\
  Input $x = 0 (x = 1)$ if a c-type (e-type) overlap is to be used when connecting two C/E nets; \\
 Create a first basic C/E net; \\
 Call algorithm 5 $U$ times, each time generate a basic C/E net connecting the existing net with at most $H$ arcs by using the x-type overlap approach. 
\end{algorithm}

\begin{algorithm}
\caption{connect$(x,H,N_1,N_2)$}
\raggedright
	\setcounter{AlgoLine}{0}

		\LinesNumbered
	\If{$x=0 (1)$}{
 connect net $N_1$ with $N_2$  with at most $H$ random c-type (e-type) node overlaps;}
\end{algorithm}

In the above algorithms, a large net is generated as target of frequent subnet mining. The generated small nets will be inserted in the large net to test the BigCarl algorithm. Only c-type overlap is used for net connection, which plays also a role in frequent net graphs embedding processing of algorithm 2-3. However, when small nets are inserted (embedded) in a large net, both types of overlap are used. Since algorithms 2-3 only care e-overlap of embedding, we have included e-type overlap in algorithm 4-5 just for showing that frequent subnet mining with e-type overlap is also possible.

\begin{example}
Reconsider Figure ~\ref{1} in a reverse way as a net generation process. Then \ref{b} (\ref{c}) is an e-type (c-type) basic net. \ref{d}(\ref{e})is a net connection with c-type(e-type)overlap. 
\end{example}

\subsection{Testing with planted nets}

Inspired by the planted motif approaches in frequent subgraph mining \cite{yang_combining_2009}, we design a planting strategy to test the correctness of BigCarl. First, we use the following terminology: (1) Planting net. The c-connect (e-connect) copies are generated based on these nets. (2) Big Planted net. The net where the copies are already planted in. (3) Big\_net. The incremental net that is waiting for the connection of big planted nets. The details are depicted as follows.

\begin{algorithm}[H]\label{plant}
\caption[position=top]{Planting-nets($NS,g,H,MinSup$)}
\raggedright
	\setcounter{AlgoLine}{0}

		\LinesNumbered
	\KwIn{A set of planting nets $NS$, two parameters $g,H$, which denote the maximum of events in each planted net and the maximum of arcs connected with each event, the frequency threshold $0<Minsup<N$;}
	Call algorithm 4 to generate a test net under $g,H$ constraints;\\
	\ForEach{planting net $s \in NS$}{Generate two random integers $c(s),e(s)$ such that $Minsup<c(s)\le N,Minsup<e(s) \le N$; \\
	 Generate $c(s)$ c-connect copies of $s$ randomly, and put them into $S$;\\
	 Generate $e(s)$ e-connect copies of $s$ randomly, and put them into $S$; }
 Shuffle $S$;\\ \tcp{disorder all the copies randomly}
 \ForEach{selected disordered copy $y \in S$}{Call connect $(0,x, y)$ or connect $(1,x, y)$ to produce a big test net $x [+] y$;}
\end{algorithm}
The validation rule is that for each big planted big nets in $NS$, perform BigCarl algorithm and check whether the mined results contain the planting net $s$ and their copies or not. The detail can be illustrated as algorithm \ref{valid}. The testing result be can seen in table \ref{MIS}.

\begin{algorithm} [H]
\caption{Generate frequent subnet in a single large net with predefined overlap graph}

\raggedright
	\setcounter{AlgoLine}{0}
\label{valid}
		\LinesNumbered
	Given $m$ copies of a small net $N'$ where the size of each small net is no less than $m$;\\ \tcp{make sure the enough overlap}
 Randomly generate $n$ pairs of natural numbers (denoting pairs of small net copies) where each natural number is between 1 and $m(m-1)$; \\
 Insert the $m$ copies of $N'$ in $N$ one by one;\\
 When the $j$-th copy is inserted, it must have a c-overlap with each $h$-th copy for $h<j$;\\\tcp{The $h$-th copy is inserted in $N$ earlier than $j$-th copy does}
 Check whether $(h,j)$ is a pair generated above;\\ \tcp{This means they should have a c-overlap}
\end{algorithm}
\begin{example}
Suppose the initial test net has 3K nodes, 6K edges. We use the small net which has 100 nodes, 200 edges to generate the copies. Let $MinSup=50$, suppose each small net has $m=10 (m< MinSup)$ copies.  
\end{example}
\begin{table}
  \caption{The procedure of building overlap graph and calculate the maximum independent set}
\label{MIS}

  \begin{tabular}{cclcc}
    \toprule
   \tabincell{c}{level-k}&\tabincell{c}{\#Embedding} &\tabincell{c}{Edges in\\overlap\\graph}&\tabincell{c}{Maximal\\independent\\set}&Runtime (s)\\
    \midrule
    1 & 780&  187&410&1820\\
    2 & 408&  100&280&890\\
    3 & 276 & 56&172&265\\
    4 &115& 27&67&87\\
    5 & 41&10&23&38\\
  \bottomrule
\end{tabular}
\end{table}

\subsection{Performance of BigCarl}

In this section, we will verify the results with two experiments. (1) The reduction power of net graph structures. When the number of arcs increasing, the ratio of net graph edges’ number/ big C/E net arcs’ number reduces exponentially. (2) The scalability of BigCarl vs. DigCarl.

\subsubsection{The reduction power of net graph structures}

Under the same parameters of net generating, we generate 5 big C/E nets with a different scale. We investigate the change of the ratio (e-type net graph edges’ number/big C/E net arcs’ number) by increasing the number of nodes in big C/E nets (10K,28K,58K,162K,490K, including events, conditions) and increasing the number of arcs as shown in Table 2. The reduction ratio can be seen in table \ref{t} and figure \ref{compress}.

\begin{table}
  \caption{The edges number of e-type net graph vs. the arcs number of big C/E nets}
  \label{t}
  \scriptsize
  \begin{tabular}{cclccc}
    \toprule
   \tabincell{c}{ \#arcs in big\\  C/E nets (AB)}&\tabincell{c}{5400}
    &11,518&23,416&64,811&281,575\\
    \midrule
    \tabincell{c}{\#edges in e-type\\ net graph (AE)} & 1522& 3196&4699&6971&25420\\
     \midrule
    AE/AB & 28\%&27.7\%&20\%&10.7\%&9\%\\
 
  \bottomrule
\end{tabular}
\end{table}
\begin{figure}
\includegraphics[width=.6\textwidth]{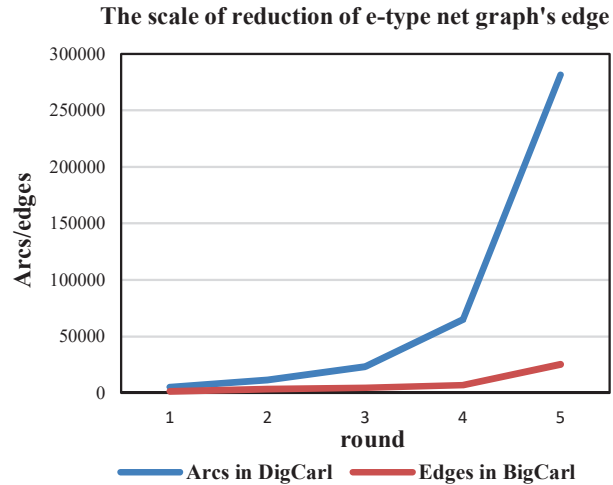}
\caption{The reduction of e-type net graph’s edge}
\label{compress}
\end{figure}

\subsubsection{The scalability of BigCarl vs. DigCarl}
The following experiments compare the salability of BigCarl with DigCarl by evaluating (1) the runtime (in seconds), (2) the memory overheads (in MB) depending on the growing number of big C/E net arcs. Note that due to the hardware limitations, the big C/E nets (with almost 10K nodes, 20K edges) cannot be mined in a tolerable time. Therefore, we mainly focus on the big C/E nets with maximum 5K nodes and 10K edges to compare the scalability. Suppose the frequency threshold is 10, varying the number of nodes (2835, 4163, 4927,5561,6019) and the number of arcs (6K,7K,8K,9K,10K). The trend of runtime and memory usage can be seen in figure \ref{ra} and \ref{rb}. It shows that the overheads of BigCarl are significantly reduced compared with those of DigCarl. Figure \ref{ra} shows that the growth trend of both BigCarl and DigCarl is exponential, whereas DigCarl has been running for more than 1 day (it cannot be shown in the figure). 

\begin{figure}
%

	\begin{minipage}[c]{\linewidth}
		\centering
		\subfloat[Runtime ]{\includegraphics[width=0.4\textwidth]{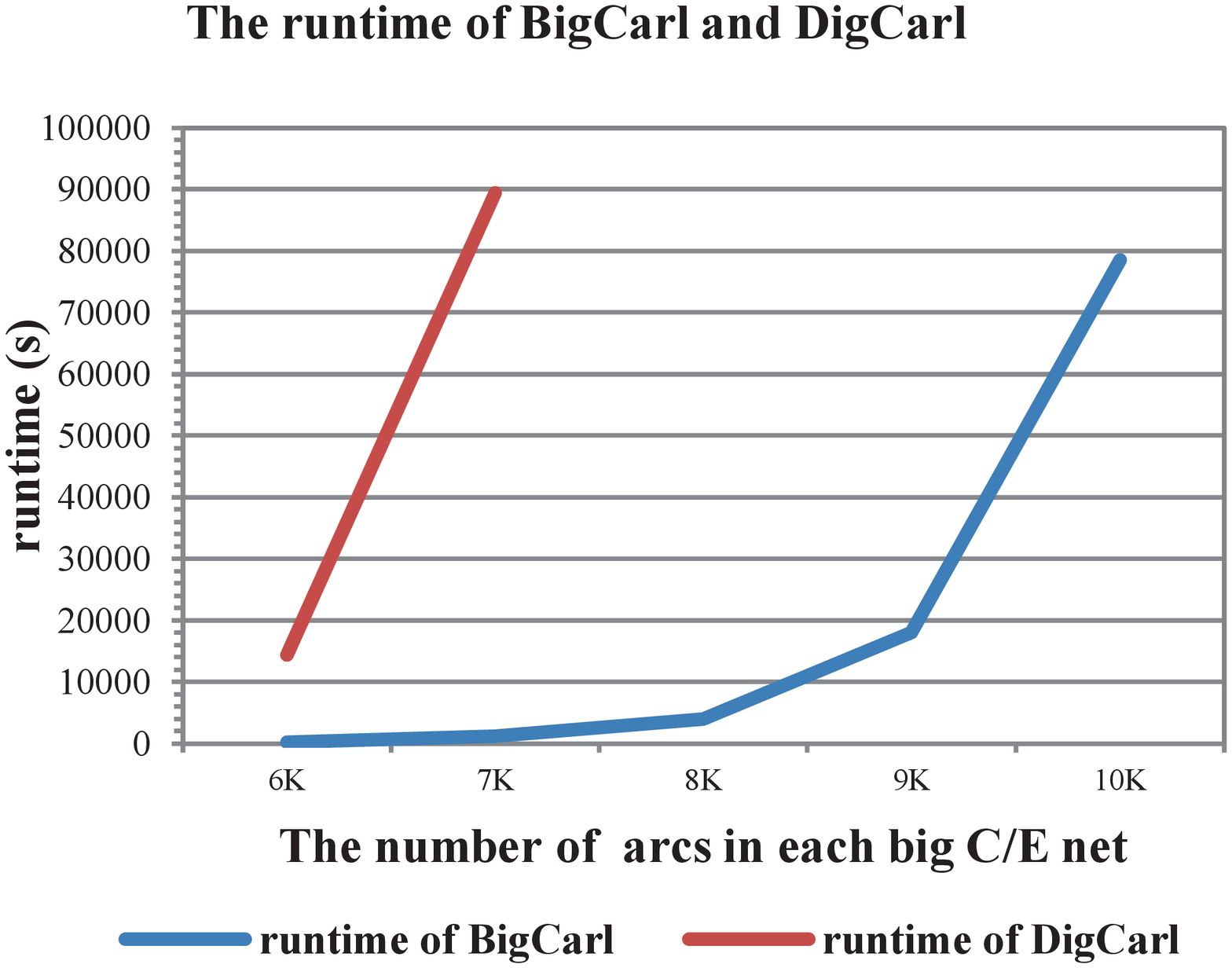}\label{ra}}

	\end{minipage}
		\begin{minipage}[c]{\linewidth}
			\centering
			\subfloat[Memory ]{\includegraphics[width=0.4\textwidth]{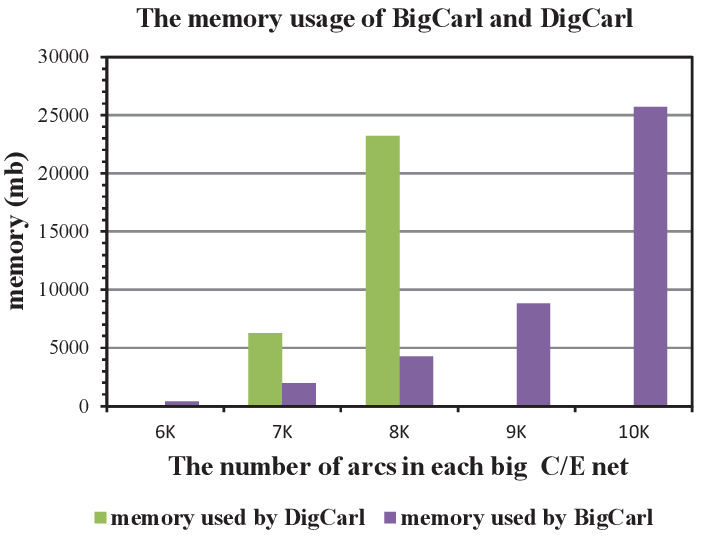}\label{rb}}

		\end{minipage}
			

\caption{The comparison of BigCarl and DigCarl}
\label{3}
\end{figure}
 
\section{CONCLUSION}\label{sec:conclude}
The main contributions of this paper can be summarized as follows: This paper is the first one addressing the problem of finding frequent subnets from a single large Petri net. We not only make use of Petri net’s basic properties in algorithm design but have extended them with a duality principle about Petri nets’ event and condition nodes. With this principle we differentiate between c-type and e-type (net graph, overlap, connection, basic net, complete net, frequent subnet mining) and thus expanded the application possibility of our research results largely. We established a detailed framework for implementing the MIS algorithm in case of Petri nets and their net graph representation. To evaluate the effectiveness of our approach we developed a technique of frequent subnet mining under predefined overlap graph. This technique inserts copies of small Petri nets in a large Petri net with randomly generated overlap schema. This technique provides the possibility of checking the correctness of frequent subnet mining. Our BigCarl algorithm can mine 10K nodes scale Petri nets transform 100K nodes scale Petri nets in net graphs.
  \bibliographystyle{ACM-Reference-Format}
  \bibliography{review}


\end{document}
\endinput